\newtheorem{theorem}{Theorem}[section]
\newtheorem{lemma}[theorem]{Lemma}
\newtheorem{remark}[theorem]{Remark}
\newtheorem{proposition}[theorem]{Proposition}
\newcommand{\R}{\mathbb{R}}
\newcommand{\trace}{\text{tr}}
\title{\LARGE \bf
Tunable Thresholds and Frequency Encoding\\ in a Spiking NOD Controller
}
\author{Ian Xul Belaustegui$^{a}$, 
        Alessio Franci$^{b}$, 
        Naomi Ehrich Leonard$^{a}$% <-this % stops a space
    \thanks{}% <-this % stops a space
    \thanks{This research has been supported in part by ONR grant N00014-19-1-2556 and the William R.\ and Jane G.\ Schowalter Research Fund.}
    \thanks{$^{a}$ Department of Mechanical and Aerospace Engineering, Princeton University; \texttt{\{ianxul,naomi\}@princeton.edu}}%
    \thanks{$^{c}$ Department of Electrical Engineering and Computer Science, University of Liege, and WEL Research Institute, Wavre, Belgium; \texttt{afranci@uliege.be}}%
}
\begin{document}

\maketitle

\begin{abstract}
     Spiking Nonlinear Opinion Dynamics (S-NOD) is an excitable decision-making model inspired by the spiking dynamics of neurons. S-NOD enables the design of agile decision-making that can rapidly switch between decision options in response to a changing environment. In S-NOD, decisions are represented by discrete opinion spikes that evolve in continuous time. Here, we extend previous analysis of S-NOD and explore its potential as a nonlinear controller with a tunable balance between robustness and responsiveness to input. We identify and provide necessary conditions for the bifurcation that determines the onset of periodic opinion spiking. We leverage this analysis to characterize the tunability of the input-output threshold for opinion spiking as a function of the model basal sensitivity and the tunable dependence of opinion spiking frequency on input magnitude above the threshold. We conclude with a discussion of S-NOD as a new neuromorphic control block and its extension to distributed spiking controllers.  
\end{abstract}

\section{Introduction}
Decision-making—the integration of sensory information and internal objectives to select an action—is a fundamental process in biological systems, operating across all levels of organization, from molecular networks~\cite{Balázsi_CellDecisions_2011} to neural populations~\cite{Hein_EcoDecision_2022} and collective behaviors~\cite{Sridhar_____Couzin_2021}. Modeling biological decision-making is crucial not only for understanding its underlying mechanisms but also for engineering bio-inspired autonomous systems.

Traditional models of opinion dynamics often struggle to capture the robust yet flexible decision-making behavior observed in biological systems. The Nonlinear Opinion Dynamics (NOD) model \cite{Bizyaeva_NOD_2023, Leonard_NOD_ARpaper2024} shows how balancing negative and positive feedback can lead to robust, fast, and flexible decision-making through an organizing pitchfork bifurcation.
However, NOD may not always be able to switch \textit{between} decision options with sufficient agility in rapidly changing environments. The Spiking-NOD (S-NOD) model~\cite{Cathcart_SNOD_2024} addresses this limitation by making NOD excitable, in the sense that it has a roughly linear response to input below a threshold, but can produce large excursions in phase-space (\textit{spikes}) for input above the threshold~\cite{Izhikevich_DynamicalNeuro_2007}. Excitability is achieved through adaptive slow negative feedback that brings the NOD state back to its neutral, ultra-sensitive state each time a decision is made, resulting in excitable {\it opinion spikes}. 

As an excitable system, S-NOD exhibits a form of controlled instability, where positive feedback leads to large and fast deviations away from equilibrium while slow negative feedback ensures regulation toward it. This mechanism has been successfully applied to robot navigation in crowded environments, ensuring responsive and decisive behavior~\cite{Cathcart_SNOD_2024}. The use of excitability for control is central to neuromorphic engineering~\cite{Sepulchre_Spiking_2022}, with applications demonstrated in tasks such as the rhythmic control of a pendulum~\cite{Schmetterling_NeuroPendulum_2024}. Fig.~\ref{fig:CharlotteFig1}, adapted from \cite{Cathcart_SNOD_2024}, illustrates the agility of the S-NOD controller.

Here, we extend the analysis in~\cite{Cathcart_SNOD_2024} to rigorously characterize two key tunable features of S-NOD useful for control: %the tunability of 
the spiking threshold and the response of spiking frequency to input magnitude. %((f-I) curve). 
Section \ref{sec:S-NOD} introduces S-NOD and the notation for our analysis. In Section \ref{sec:Hopf-bif}, we study the dependence of S-NOD dynamics on exogenous input and prove that spiking emerges through a Hopf bifurcation at a tunable input threshold. In Section \ref{sec:PF-bif}, we study deadlock breaking where two symmetric spiking limit cycles appear in the absence of input as the system undergoes a symmetry-breaking pitchfork bifurcation at the critical basal sensitivity. In Section \ref{sec:tunability&freq}, we use geometric arguments to characterize how the spiking threshold and frequency response to input can be tuned as a function of model parameters. This paper thus provides an in-depth dynamical systems characterization of S-NOD as a tunable and agile spiking control element.

\begin{figure}
    \centering
    \includegraphics[width=1.0\linewidth]{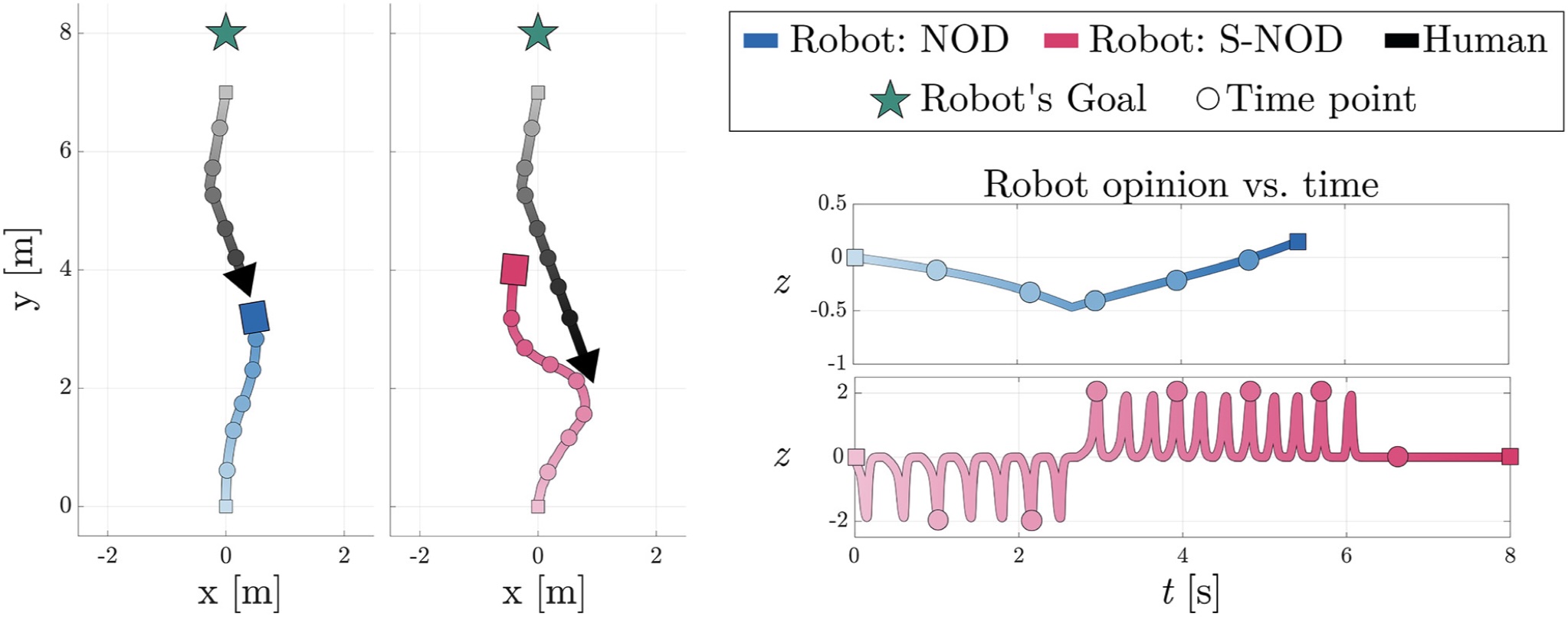}
    \caption{Figure from \cite{Cathcart_SNOD_2024} shows S-NOD (pink) applied to control a planar robot in a social navigation scenario. NOD (blue) is shown for comparison.  
    The robot aims to reach a goal marked by a star while avoiding collision with an unpredictable oncoming mover (black). The NOD controller is slow to respond to an abrupt change in the direction of the oncoming mover, but the S-NOD controller responds with agility and avoids colliding. }
    \label{fig:CharlotteFig1}
\end{figure}

% \section{Notation}
% $\R_+=(0,\infty)$.

\section{Spiking Nonlinear Opinion Dynamics}\label{sec:S-NOD}

S-NOD is governed by the  differential equations
\begin{align}\label{eq:sys}
    \dot z &= -d\,z + \tanh((kz^2+\mu_0-s)a z+b)  \\
    \dot s &= \varepsilon(-s +k_sz^4)\,. \notag
\end{align}
The state $z(t)\in\R$ is the agent's opinion at time $t$. For decision-making between two mutually exclusive options, we interpret $z(t)>0$ ($z(t)<0$) as an opinion in favor of option 1 (option 2). $|z(t)|$ is the strength of the opinion and $z(t)=0$ is a neutral opinion (indecision). $s(t)\in\R$ is the agent's slow recovery state at time $t$. $d \in\R_+$ is the gain of negative feedback opinion-damping term $-d\,z$. $\mu_0\in\R_+$ sets the basal level of the agent's sensitivity\footnote{In \cite{Cathcart_SNOD_2024} $\mu$, $\mu_0$ are  $u$, $u_0$ and called ``attention'' and ``basal attention.''} $\mu$ to positive feedback term $az$, where $\mu = kz^2+\mu_0-s$. %of the positive feedback term $az$.  
The agent's sensitivity $\mu$ increases with gain $k\in\R_+$ as the magnitude of the agent's opinion increases. At a much slower time-scale, $\mu$  decreases as the slow recovery variable $s$ increases. This creates a balance of fast positive and slow negative feedback. $\varepsilon\in\R_+$, $\varepsilon\ll1$ sets the timescale separation between the dynamics of $z$ and $s$: $z$ is fast and $s$ is slow.  $\mu_o,a,k\in\R_+$ tune the positive feedback gain. $k_s\in\R_+$ tunes the slow negative feedback gain. $b\in\R$ is an exogenous input. In applications $b$ would integrate various task-relevant signals, and represent evidence for a decision in the positive or negative $z$ direction. In the following analysis we consider $b$ to be a constant parameter. For our results to apply directly in situations where $b$ is a time-varying signal, we would require it to have a timescale slower than the dynamics  (\ref{eq:sys}). 
%In the context of excitability, we consider $\varepsilon>0$ to be arbitrarily small, leading to a separation of timescales into a fast $z$ dynamics and a slow $s$ dynamics. 

We first establish some basic results needed to analyze the onset of opinion spiking as either input $b$ (in Section~\ref{sec:Hopf-bif}) or tuning parameter $\mu_0$ increases (in Section~\ref{sec:PF-bif}). 

%We first characterize the equilibria of \eqref{eq:sys}. % as the solutions to an implicit equation of $z$ as follows: 
\begin{remark}
    If the point $(\bar z, \bar s) \in \R^2$ is an equilibrium of (\ref{eq:sys}) then $\bar s = k_sz^4$, and $\bar z$ solves the equation
    \begin{equation}\label{eq:imp-sol}
        h_b(\bar z):=-d\bar z + \tanh(a \bar z (-k_s\bar z^4+k\bar z^2 +\mu_0)+b)=0.
    \end{equation}
    We will %sometimes 
say $\bar z\in\R$ is an equilibrium if it satisfies \eqref{eq:imp-sol}.
\end{remark}

The stability of equilibria of two-dimensional systems is fully determined by the trace and determinant of the Jacobian evaluated at equilibrium. Let $\varphi(z,s):=a z (k z^2 + \mu_0 - s)+b$ and $\psi_b(z):=\varphi(z, k_s z^4) = a z(-k_s z^4 + k z^2 + \mu_0 )+b$.
% , and $f(z, s):=(\dot z, \dot s)$. 
Letting $\tanh'(x)=\frac{d\tanh(x)}{dx}=1-\tanh^2(x)$, the Jacobian of~\eqref{eq:sys} is
\begin{align}
    &J(z,s)\!:=\!\begin{pmatrix}
        -d +\tanh'(\varphi)\frac{\partial \varphi}{\partial z} & \tanh'(\varphi)\frac{\partial \varphi}{\partial s} \\
        4\varepsilon k_s z^3 & -\varepsilon
    \end{pmatrix}\,.
    % =D_{(z,s)}f(z,s)\\
\end{align}
%where $\tanh'(x)=1-\tanh^2(x)$.
\begin{remark}\label{rem:tanh-der}
    If $\bar z\in\R$ is an equilibrium then $\tanh'(\varphi(\bar z, k_s\bar z^4))=1-d^2\bar z^2$.
\end{remark}

For notation purposes let $J_z:=J(z, k_sz^4)$ be the Jacobian evaluated over the $s$-nullcline. The following result follows from elementary algebraic manipulations. %Through elementary algebraic manipulation we can write its trace and determinant as simple polynomials. 

\begin{lemma}
    If $\bar z$ is an equilibrium, then
    \begin{align}
        \trace (J_{\bar z}) &= \underbrace{ad^2k_s}_{c_3}\bar z^6-\underbrace{(3ad^2k+ak_s)}_{c_2}\bar z^4 \notag\\
        &+\underbrace{(3ak-ad^2\mu_0)}_{c_1}\bar z^2+\underbrace{a\mu_0-d}_{c_0} - \varepsilon \label{eq:traceJ}\\
         \frac{\det (J_{\bar z})}{\varepsilon} &= -\underbrace{5ad^2k_s}_{\hat c_3=5c_3}\bar z^6+\underbrace{(3ad^2k+5ak_s)}_{\hat c_2=c_2+4ak_s}\bar z^4\notag\\&-\underbrace{(3ak-ad^2\mu_0)}_{\hat c_1=c_1}\bar z^2-\underbrace{(a\mu_0-d)}_{\hat c_0=c_0}. \label{eq:detJ}
\end{align}
\end{lemma}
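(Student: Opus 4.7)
The proof is essentially a direct computation, so my plan is to set up the expressions cleanly so that the expansion is transparent rather than tedious.

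First I would compute the partial derivatives of $\varphi$ once and for all. Since $\varphi(z,s)=az(kz^2+\mu_0-s)+b$, we get $\partial_z\varphi = 3akz^2+a\mu_0-as$ and $\partial_s\varphi = -az$. Evaluating along the fixed-point locus $s=k_s\hat z^4$ yields $\partial_z\varphi|_{\hat z} = 3ak\hat z^2+a\mu_0-ak_s\hat z^4$ and $\partial_s\varphi|_{\hat z} = -a\hat z$. Combined with the preceding remark, which gives $\tanh'(\varphi(\hat z,k_s\hat z^4))=1-d^2\hat z^2$, the entries of $J_{\hat z}$ are now fully explicit.

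For the trace, I would simply add the two diagonal entries and expand the product $(1-d^2\hat z^2)(3ak\hat z^2+a\mu_0-ak_s\hat z^4)$. Collecting terms by powers of $\hat z$ yields the polynomial in~\eqref{eq:traceJ}; the $-\varepsilon$ comes from the $(2,2)$ entry.

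For the determinant, rather than expanding the $2\times2$ determinant term by term, I would observe that
\begin{equation*}
\tfrac{1}{\varepsilon}\det J_{\hat z} = d - \tanh'(\varphi(\hat z,k_s\hat z^4))\bigl(\partial_z\varphi|_{\hat z} + 4k_s\hat z^3\,\partial_s\varphi|_{\hat z}\bigr),
\end{equation*}
and that the parenthesized quantity is exactly $\psi'(\hat z)$, since $\psi(z)=\varphi(z,k_s z^4)$ and the chain rule yields $\psi'(z)=\partial_z\varphi(z,k_sz^4)+4k_sz^3\,\partial_s\varphi(z,k_sz^4)$. Computing $\psi'(\hat z) = -5ak_s\hat z^4+3ak\hat z^2+a\mu_0$ and multiplying by $-(1-d^2\hat z^2)$ then adding $d$ produces~\eqref{eq:detJ} after grouping by powers of $\hat z$.

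The only mild obstacle is bookkeeping: making sure signs line up in the coefficients $\hat c_i$ versus $c_i$, in particular verifying that $\hat c_2=c_2+4ak_s$ (the extra $4ak_s$ comes entirely from the $4k_s\hat z^3\cdot\partial_s\varphi|_{\hat z}$ contribution to $\psi'$), and that $\hat c_1=c_1$, $\hat c_0=c_0$, $\hat c_3=5c_3$. I would present these as a one-line check at the end.
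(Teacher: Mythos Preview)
Your proposal is correct and is exactly the direct computation the paper has in mind; indeed the paper omits the proof entirely, stating only that it ``follows from elementary algebraic manipulations.'' Your use of the chain-rule identity $\psi'(z)=\partial_z\varphi+4k_sz^3\,\partial_s\varphi$ to package the determinant as $d-(1-d^2\hat z^2)\psi'(\hat z)$ is a clean choice and in fact is the same relation the paper records immediately afterward as~\eqref{eq:h'_and_derJ}.
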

Finding the roots of~\eqref{eq:traceJ} and \eqref{eq:detJ} reduces to solving two third-degree polynomials, namely,  
$q_{tr}(z):=\trace (J_{\sqrt{z}})$ and $q_{det}(z):=\det (J_{\sqrt{z}})/\varepsilon$. % be these two polynomials. 

%$$=5a d^2k_s\bar z^6-a (5k_s+3d^2 k)\bar z^4 + a(3k-d^2\mu_0)\bar z^2 + a \mu_0 - d \\$$

\section{Spiking from increasing $b$}\label{sec:Hopf-bif}

\begin{figure}
    \centering
    \includegraphics[width=1.0\linewidth]{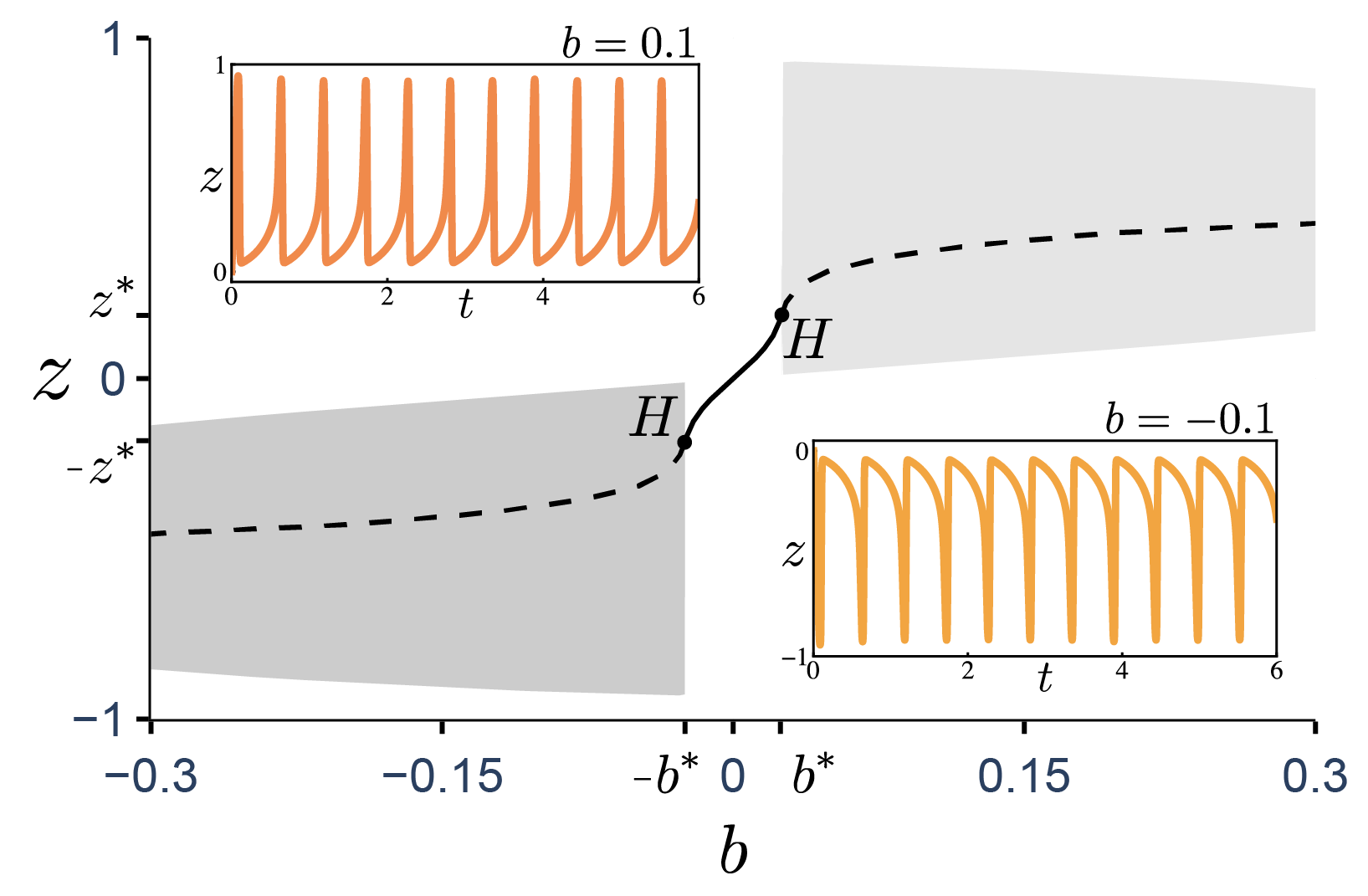}
    \caption{Bifurcation diagram for (\ref{eq:sys}) with respect to input ($b$) projected onto the opinion ($z$) axis to reduce from three dimensions to two. Solid lines indicate stable equilibria, dashed lines indicate unstable equilibria, shaded areas indicate stable spiking limit cycles, vertical width of shaded area represents spike amplitude. Black circles marked with $H$ correspond to the Hopf bifurcations at $b=\pm b^*$. Inset plots show spiking dynamics for constant suprathreshold positive ($b=0.1$) and negative ($b=-0.1$) input. Parameters used are $\mu_0 = 0.8, a = 1, d = 1, k = 3, k_{s} = 22, \varepsilon = 0.05$.}
    \label{fig:b-bifurcation}
\end{figure}

We characterize S-NOD dynamics as a function of the input value $b$. The main result, Thm. \ref{thm:hopf}, shows that S-NOD transitions, through a Hopf bifurcation, from rest to spike as the input magnitude $|b|$ increases above a critical value $b^*>0$, determined by the model parameters. %, and that this transition happens through a Hopf bifurcation.
%tells us that when starting from a no-input parameter regime where there is a unique stable equilibrium at the origin (and with some other assumptions on parameter values), spike limit cycles exist when the input magnitude is within a critical range, and these limit cycles appear through a Hopf bifurcation.
%In the singular perturbation regime we can consider the critical input at which the Hopf bifurcation occurs as the threshold between stable and spiking behavior.
Thm.~\ref{thm:hopf} characterizes S-NOD, the spiking threshold and its sensitivity to model parameters. Additional results presented in this section are preliminary to Thm.~\ref{thm:hopf} and further characterize S-NOD dynamics. Fig.~\ref{fig:b-bifurcation} shows a numerical bifurcation diagram, predicted by Thm.~\ref{thm:hopf}, revealing spiking limit cycles for increasing input magnitude in both the positive (upward spikes) and negative (downward spikes) input directions.
%demonstrating the appearance of spiking limit cycles for increasing input magnitude in both the positive and negative directions.
All analysis in this section is presented for constant $b\geq 0$. The case $b\leq 0$ follows directly from the model's odd symmetry.

%For certain conditions on parameters, we can prove the system undergoes a Hopf bifurcation with respect to the input $b$ at a critical input threshold $\pm b^*$, after which the system goes from a resting state to spiking. Fig. \ref{fig:b-bifurcation} shows a bifurcation diagram projected onto the opinion axis, showing the stability of equilibria and amplitude of the spike limit cycle, as well as two symmetric examples of spiking. All the analysis in this section is presented for $b\geq 0$, since the case $b\leq 0$ follows directly from symmetry.

%crossing the imaginary axis as a bifurcation parameter passes a critical value. This gives us a useful characterization of the first condition needed to identify a Hopf bifurcation. 

\begin{remark}\label{obs:im-eigs-char}
    A Hopf bifurcation happens at an equilibrium having a pair of conjugate non-zero purely imaginary eigenvalues. At an equilibrium $(\bar z, k_s\bar z^4)$ of (\ref{eq:sys}) this happens if and only if $\trace (J_{\bar z})=0$ and $\det(J_{\bar z})>0$.
\end{remark}

%Since our approach is implicit, we first consider the situation where the equilibrium is unique for all inputs. In this section we write $h_b$ and $\psi_b$ to make explicit the dependence on parameter $b$. The following result provides a tight sufficient condition on the parameters for the uniqueness of the equilibrium to hold. 

Although explicit expressions for equilibria of (\ref{eq:sys}) are not computable, we can provide conditions under which (\ref{eq:sys}) has a unique equilibrium.
\begin{proposition}\label{prop:single-sol}
    Assume that $\det(J_z)>0$ for all $\lvert z\rvert<d^{-1}$. Then, for all $b\in\R$, (\ref{eq:sys}) has a unique equilibrium.
\end{proposition}
\begin{proof}
    Let $b\geq 0$. We first show that there is at least one equilibrium. If $b=0$ then $h_0(0)=0$. Otherwise, if $b>0$ then  $h_b(0)=\tanh(b)>0$, and $\lim_{z\to\infty}h_b(z)=-\infty$, which by continuity implies the existence of $\bar z>0$ such that $h_b(\bar z)=0$. % Thus, there is always an equilibrium. 
    Observe that, since $|\tanh(x)|<1$, %the saturating nonlinearity in the dynamics of $z$ implies that
    any equilibrium $\bar z$ of~(\ref{eq:sys}) satisfies $\lvert\bar z\rvert <d^{-1}$ and that, invoking the hypothesis $\det(J_z)>0$,
\begin{align}
    h_b'(\bar z)=-d+(1-d^2\bar z^2)\psi_b'(\bar z)  =-{\det (J_{\bar z}})/{\varepsilon}<0.\label{eq:h'_and_derJ}
\end{align}

%Hence the steady-state equation function~\eqref{eq:imp-sol} is monotonically decreasing and therefore the equilibrium is unique.
% =-q_{det}(\bar z^2)

%Where the last inequality is assumed in the hypothesis. 
This implies all roots of $h_b$ have strictly negative derivative, which in turn implies there cannot be more than one root (having two consecutive roots with derivatives of the same non-zero sign would lead to a contradiction).
\end{proof}

\begin{remark}
    The assumption $\det(J_z)>0$ for all $\lvert z\rvert<d^{-1}$ in Proposition~\ref{prop:single-sol} can be verified by studying positive definiteness of the third degree polynomial $q_{det}$ on the interval $[0,d^{-2})$ and is always satisfied for sufficiently small values of $\mu_0$ and $k$.
\end{remark}
%\ix{The condition in the previous proposition can be reduced to showing that the third degree polynomial $q_{det}$ is positive in the interval $[0,d^{-2})$. It is not too hard to write this property in terms of algebraic conditions on the parameters. In general, for sufficiently small values of $\mu_0$ and $k$, the condition holds and the equilibrium is unique. }

% Let $\zeta_\pm$ be the two roots of $q_{det}'$. If $\zeta_-$ is real and positive, and $q_{det}(\zeta_-)>0$, 

The next proposition shows that $\bar z$ increases monotonically with $b$, and asymptotically to $d^{-1}$ as $b\to\infty$.
%as the input $b$ increases, increases monotonically and asymptotically towards $d^{-1}$.  

\begin{proposition}\label{prop:sol-lim}
    Assume that $\det(J_z)>0$ for all $\lvert z\rvert<d^{-1}$. Given $b\in\R$, let $\bar z_b\in\R$ be the unique zero of $h_b$. Then $\lim_{b\to\infty}\bar z_b=d^{-1}$ and $\bar z_b$ is strictly monotonically increasing with $b$. 
\end{proposition}
\begin{proof}
    Observe that $\forall z\in\mathbb{R},\lim_{b\to\infty}h_b(z)=-d\,z+1$. %That is, $h_b$ converges pointwise to the function $z\mapsto-dz+1$. From this follows that
    Hence, $\lim_{b\to \infty}\bar z_b=d^{-1}$.
 %To prove that $\bar z_b$ is strictly increasing with respect to $b$, first notice that f
    Using~\eqref{eq:h'_and_derJ}, by implicit differentiation it follows that
    %Since we assume the same hypothesis, then from the proof of Prop. \ref{prop:single-sol} it follows that $\forall z\in(-d^{-1}, d^{-1}), h_b(z)=0\Rightarrow h_b'(z)<0$. So
    %we can apply the implicit function theorem to the relation $h_b(\bar z_b)=0$:
    \begin{align*}
        % 0&=-d\frac{\partial \bar z_b}{\partial b} + \tanh'(\psi(\bar z_b))\frac{\partial \psi}{\partial b}(\bar z_b) \\
        %   &=-d\frac{\partial \bar z_b}{\partial b} +(1-d^2\bar z_b^2)(1+\psi'_b(\bar z_b)\frac{\partial \bar z_b}{\partial b}) \\
           \frac{d \bar z_b}{d b}&=-\frac{\partial h_b(\bar z_b)/\partial b}{\partial h_b(\bar z_b)/\partial \bar z_b} = \frac{\varepsilon(1-d^2\bar z_b^2)}{\det (J_{\bar z_b})},
    \end{align*} 
    where the last equality follows by Remark \ref{rem:tanh-der} and  \eqref{eq:h'_and_derJ}. Since $\lvert\bar z_b\rvert<d^{-1}$ and $\det(J_{\bar z_b})>0$, then $\forall b\in\R_+, \frac{\partial \bar z_b}{\partial b}>0$. 
\end{proof}

%The previous result holds in general for how the equilibrium at the origin varies with $b$, although in the case where the equilibrium is not unique the continuation of the origin might cease to exist in a saddle node bifurcation, which is briefly discussed at the end of this section. 

Under the same assumptions of Props.~\ref{prop:single-sol} and~\ref{prop:sol-lim}, in order to characterize changes of stability of the unique equilibrium of~(\ref{eq:sys}) it is sufficient to study if and where $\trace(J_z)$ changes sign on the interval $(-d^{-1}, d^{-1})$.
%Since the Jacobian is independent of $b$, the results thus far tell us that in order to show that the unique equilibrium undergoes a change in stability, we only need to show that there exist values in $(-d^{-1}, d^{-1})$ where $\trace(J_z)$ changes sign (we are assuming the determinant is always positive).
The following proposition provides tight sufficient conditions for this. Let $\xi_\pm:=\frac{c_2\pm\sqrt{c_2^2-3c_1c_3}}{3c_3}$ be the two roots of $q_{tr}'$ (i.e., the two inflection points of $q_{tr}$).

\begin{proposition}\label{prop:trace-roots}
    If $\mu_0 <\frac{d}{a}$, $k> \frac{d^3}{3a}$, and $\trace (J_{\sqrt{\xi_-}})>0$, then there exist $z^{\ast}, z^{\dagger}\in (0,d^{-1})$ such that $\trace (J_{z^{\ast}})=\trace (J_{z^{\dagger}})=0$ and $\forall z\in[0, z^{\ast})\cup(z^{\dagger}, d^{-1}], \trace (J_{z})<0$, and $\forall z\in(z^{\ast}, z^{\dagger}), \trace (J_{z})>0$. 
\end{proposition}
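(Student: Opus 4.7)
The plan is to make the change of variables $w := z^2$, turning $\trace J_z$ into a cubic in $w$, and then combine the sign data from the hypotheses with a boundary evaluation at $w = d^{-2}$ to apply the intermediate value theorem twice. Substituting into~\eqref{eq:traceJ} gives
$$p(w) := c_3 w^3 - c_2 w^2 + c_1 w + (c_0 - \varepsilon), \qquad \trace J_z = p(z^2).$$
Its derivative $p'(w) = 3 c_3 w^2 - 2 c_2 w + c_1$ has roots $\xi_0 \leq \xi_1$, real by the hypothesis $\xi_0 \in \R_+$; since $c_3 > 0$, $p$ attains a local maximum at $\xi_0$ and a local minimum at $\xi_1$, is strictly increasing on $(-\infty, \xi_0]$, and strictly decreasing on $[\xi_0, \xi_1]$.

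I then evaluate $p$ at three key points. By hypothesis, $p(0) = c_0 - \varepsilon < 0$ and $p(\xi_0) = \trace J_{\sqrt{\xi_0}} > 0$. The new observation is $p(d^{-2}) = \trace J_{d^{-1}} = -d - \varepsilon < 0$: at $\hat z = d^{-1}$ the factor $\tanh'(\varphi) = 1 - d^2 \hat z^2$ vanishes, so $J_{d^{-1}}$ becomes diagonal with entries $-d$ and $-\varepsilon$ (substituting the $c_i$ directly into $p(d^{-2})$ yields the same result by cancellation). Applying IVT on $(0, \xi_0)$ and on $(\xi_0, d^{-2})$ then produces $w^* \in (0, \xi_0)$ and $w^{**} \in (\xi_0, d^{-2})$ with $p(w^*) = p(w^{**}) = 0$; setting $z^* := \sqrt{w^*}$ and $z^{**} := \sqrt{w^{**}} \in [0, d^{-1}]$ gives the two required zeros of the trace. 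The sign pattern then follows from the cubic shape of $p$: on $[0, \xi_0]$ the polynomial is monotone with a single sign change at $w^*$, and on $[\xi_0, d^{-2}]$ the same analysis (splitting at $\xi_1$ if necessary, and using that $p$ is increasing past $\xi_1$ from $p(\xi_1) \leq p(d^{-2}) < 0$) shows that the only sign change there is at $w^{**}$, yielding $p > 0$ on $(w^*, w^{**})$ and $p < 0$ on $[0, w^*) \cup (w^{**}, d^{-2}]$.

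The main obstacle I anticipate is verifying $\xi_0 < d^{-2}$, which is needed for the second IVT application to place $w^{**}$ inside $[0, d^{-2}]$. This inequality is not directly implied by the three stated hypotheses: in principle $\xi_0 > d^{-2}$ is compatible with $p(\xi_0) > 0$, in which case $p$ would stay negative on $[0, d^{-2}]$ and no zero of $\trace J_z$ would exist in $[0, d^{-1}]$. Ruling this case out either requires an additional algebraic inequality on the $c_i$ (equivalently, on $\mu_0, a, d, k, k_s$), or a slight strengthening of the hypothesis $\trace J_{\sqrt{\xi_0}} > 0$ to force the location of the local maximum of $p$ inside $[0, d^{-2}]$; this is where the argument demands the most care.
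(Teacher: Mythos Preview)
Your approach is exactly the paper's: both introduce the cubic $p(w)=q_{tr}(w)=c_3w^3-c_2w^2+c_1w+(c_0-\varepsilon)$, use $p(0)<0$, $p(d^{-2})=-d-\varepsilon<0$, and $p(\xi_0)>0$, and then read off the two zeros and the sign pattern from the shape of the cubic.

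The obstacle you flagged---whether $\xi_0<d^{-2}$---is a real point that the paper simply asserts without comment (it writes $\max_{[0,d^{-2}]}q_{tr}=q_{tr}(\xi_0)$), but it is in fact forced by information you already have. You computed $p(0)=a\mu_0-d-\varepsilon$ and $p(d^{-2})=-d-\varepsilon$, so $p(0)-p(d^{-2})=a\mu_0>0$. If $\xi_0\geq d^{-2}$ then $[0,d^{-2}]\subset(-\infty,\xi_0]$, on which $p$ is strictly increasing, giving $p(0)<p(d^{-2})$, a contradiction. Hence $\xi_0\in(0,d^{-2})$ automatically, no extra hypothesis is needed, and your IVT argument goes through as written.
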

\begin{proof}
   % It is easy to see that 
    For all positive parameters, $c_2^2-3c_1c_3\geq 0$, and so $\xi_\pm$ are real. The condition $k> \frac{d^3}{3a}>\frac{d^2\mu_0}{3}$ implies $c_1> 0$, so $\xi_->0$. Since $c_3>0$ is the leading coefficient of $q_{tr}$ (see (\ref{eq:traceJ})),
    % $q_{tr}(z)=\trace (J_{\sqrt{z}})=c_3z^3-c_2z^2+c_1z+c_0-\varepsilon$
    we have $\lim_{z\to\infty}q_{tr}(z)=\infty$. Also $q_{tr}(d^{-2})=-d-\varepsilon<0$, and the assumption $\frac{d}{a}>\mu_0$ implies $q_{tr}(0)=c_0-\varepsilon<0$. So $\xi_-<\xi_+$ are the local minimum and maximum respectively of this third degree polynomial. Thus, in the interval $(0,d^{-2})$, $q_{tr}$ goes from negative, to positive, to negative. Since $\trace(J_{z})=q_{tr}(z^2)$, we conclude there exist two roots $z^{\ast}, z^{\dagger}\in(0,d^{-1})$, $z^{\ast} < z^{\dagger}$, which satisfy the properties for $\trace(J_z)$ in the proposition statement. 
\end{proof}

 % Now, let $\xi_0=\frac{c_2-\sqrt{c_2^2-3c_1c_3}}{3c_3}$. The following proposition tells us about the trace of the Jacobian at the equilibrium when $\mu_0<\frac{d}{a}$ but not too small, $k>\frac{d^3}{3a}$ and $\varepsilon>0$ small enough.

\begin{remark}
\label{rem:range_of_mu}
    Condition $\mu_0 <\frac{d}{a}$ ensures that without input the neutral opinion state is stable. Condition $k> \frac{d^3}{3a}$ ensures the $z$-nullcline has the shape of a symmetric subcritical pitchfork bifurcation, as shown in \cite{Cathcart_SNOD_2024}. Condition $\trace (J_{\sqrt{\xi_-}})>0$ ensures there are two intervals of possible equilibria with positive values of the trace. Since $\trace (J_{\sqrt{\xi_-}})$ is strictly monotonically increasing (decreasing) with $\mu_0$ ($\varepsilon$), for sufficiently small $\mu_0$ (large $\varepsilon$) the unique equilibrium is always stable. %Similarly, if $\varepsilon$ is too large.
    However, since $\det(J_z)$ is strictly monotonically decreasing with $\mu_0$, the assumptions of Props.~\ref{prop:single-sol} and~\ref{prop:sol-lim} that $\det(J_z)>0$ can be violated for too large $\mu_0$ (even if less than $\frac{d}{a}$), and thus our guarantees on uniqueness of the neutral equilibrium would no longer hold. In the case that $\mu_0<\frac{d}{a}$ but $\det(J_z)<0$ for some points, the equilibrium that starts at the origin when there is no input will still grow monotonically with $b$, but eventually disappears in a Saddle-Node bifurcation .
    %the equilibrium starting at the origin when $b=0$ would still be monotonically increasing but would not converge to $d^{-1}$. Instead, it converges to the first root of $\det(J_z)$, where it disappears in a Saddle-Node bifurcation, leaving only another equilibrium above the second root of $\det(J_z)$.  
    %also shifts the determinant vertically in the opposite direction, so that if it is too large the determinant is no longer always positive, and the equilibrium is not unique.
\end{remark}

The following theorem characterizes bifurcations of~(\ref{eq:sys}) with respect to $b$ for intermediate values of $\mu_0$ and small $\varepsilon$.

\begin{theorem}[Spiking Input Threshold]\label{thm:hopf}\ \\
    Assume $\mu_0 < \frac{d}{a}$,  $k> \frac{d^3}{3a}$, and $\epsilon >0$ small such that $\trace (J_{\sqrt{\xi_-}})>0$ and $\det(J_z)>0$ for all $\lvert z\rvert < d^{-1}$. Then there exist exactly two critical values of input $b^*,b^{\dagger}\in\R_+$, with $b^*< b^{\dagger}$, at which the unique equilibrium $\bar z_b$ of~(\ref{eq:sys}) undergoes a Hopf bifurcation, i.e., at $b=b^{\ast}$ and $b=b^{\dagger}$ (and symmetrically at $b=-b^{\ast}$ and $b = -b^{\dagger}$). 
\end{theorem}
\begin{proof}
    %From the proof of Prop. \ref{prop:single-sol} it follows that $\forall z\in[0, d^{-1}], \det (J_{z})>0$.
    %In particular $\det (J_0)=-\varepsilon c_0>0$, so for any $\varepsilon>0$ we have $c_0-\varepsilon<0$. Thus, by
    By Prop.~\ref{prop:trace-roots} there exist $z^{\ast},z^{\dagger}\in(0,d^{-1})$ roots of $\trace (J_z)$ as in the statement, with $z^{\ast}<z^{\dagger}$. We have $\bar z_0 = 0$, and by Prop.~\ref{prop:sol-lim} we know $\lim_{b\to\infty}\bar z_b=d^{-1}$. So by continuity there exist $b^*, b^{\dagger}\in\R_+$, with $b^*< b^{\dagger}$, such that $\bar z_{b^*}=z^{\ast}$ and $\bar z_{b^{\dagger}}=z^{\dagger}$. Thus $\trace (J_{\bar z_b^*})=\trace (J_{\bar z_{b^{\dagger}}})=0$ and $\det (J_{\bar z_b^*})>0$ and $\det (J_{\bar z_{b^{\dagger}}})>0$. By Remark~\ref{obs:im-eigs-char} at these two values of $b$ the equilibrium has two non-zero conjugate purely imaginary eigenvalues.  The transversality condition of the Hopf bifurcation follows from the fact that $\bar z_b$ is strictly increasing with respect to $b$, which also implies uniqueness of critical values $b^*$ and $b^{\dagger}$. By Thm.~$3.4.2$ in \cite{Guckenheimer_Nonlinear_2013} we conclude that at  $b=b^*$ the unique equilibrium goes from being a stable focus to an unstable focus in a Hopf bifurcation, and at $b=b^{\dagger}$ it goes from being an unstable focus to a stable focus, also in a Hopf bifurcation.
\end{proof}

%The threshold to spiking at $b^*$ can be expressed as the solution to the implicit equation $-dz^{\ast}+\tanh((-k_sz^{\ast4}+kz^{\ast2}+\mu_0)az^{\ast}+b^*)=0$, where $z^{\ast}$ can be written in closed form as the solution to a third degree polynomial.
Even though from the previous analysis we cannot conclude whether the Hopf bifurcations in Thm.~\ref{thm:hopf} are supercritical or subcritical, we have the following result, which guarantees the existence of a limit cycle precisely in the interval after the first Hopf and before the second Hopf.

\begin{theorem}[Limit Cycle After Input Threshold]\label{thm:hopf-LC}\ \\
    Assume the same hypothesis as in Thm. \ref{thm:hopf}. Then for input values $b\in(b^*, b^{\dagger})$ there exists a limit cycle.
\end{theorem}
\begin{proof}
    First, $b\in(b^*,b^{\dagger})\Rightarrow \bar z_b\in(z^{\ast},z^{\dagger})$ $\Rightarrow \trace (J_{\bar z_b})>0$, which together with $\det (J_{\bar z_b})>0$ implies the unique equilibrium is unstable. Next, we show the dynamics of the system are bounded. Let the bounding box be $\mathcal{B}=[-d^{-1},d^{-1}]\times[-1, k_sd^{-4}] $. We show for one of the edges of the box that the vector field (\ref{eq:sys}) points towards the box interior. If $(z, s)\in (-d^{-1},d^{-1})\times\{k_sd^{-4}\}$ then $\dot s = -k_sd^{-4} + k_s z^4 = k_s(z^4 - \frac{1}{d^4}) < k_s(\frac{1}{d^4} - \frac{1}{d^4}) = 0$, and so the vector field points towards the interior of the box. The same kind of argument applies to the other three edges.
    %Suppose $(z,s)\in (-d^{-1},d^{-1})\times\{-1\}$, then $\dot s = -s + k_s z^4 = 1 + k_s z^4 > 1 >0$, so the vector field points towards the inside of the box along this edge. Considering the top edge, if $(z,s)\in \{d^{-1}\}\times(-1, k_sd^{-4}) $ then $\dot z = -dz +\tanh(\psi(z)) < -1 + 1 =0$, so the field points down into the box. Now for the bottom edge, if $(z,s)\in \{-d^{-1}\}\times(-1, k_sd^{-4}) $, we get symmetrically that $\dot z = -dz +\tanh(\psi(z)) > 1 - 1 =0$, and so the field points up into the box. 
    We now consider a corner point: if $s=-1,z=d^{-1}$ then $\dot s > 0, z < 0$. This follows similarly for the other corners. 
    %if $s=-1,z=-d^{-1}$ then $\dot s > 0, z > 0$; if $s=k_sd^{-4},z=d^{-1}$ then $\dot s = 0, z < 0$; and if $s=k_sd^{-4},z=-d^{-1}$ then $\dot s = 0, z > 0$. 
    Thus, all orbits starting on the boundary of $\mathcal{B}$ stay in $\mathcal{B}$. Noting that $\bar z_b\in \text{int}({\mathcal{B}})$ is an unstable source and there are no other equilibria, we can apply the Poincaré-Bendixson theorem \cite{Guckenheimer_Nonlinear_2013}, which tells us that a compact subset of the plane that is forward invariant and contains only a hyperbolic source must also contain a limit cycle, and every trajectory starting in the set converges to a limit cycle.
\end{proof}

% The next proposition covers the case when there is no input threshold, that is, when there is always only a stable equilibrium for all $b\in\R$ and there is no spiking. This case always happens for $\mu_0$ small enough. 

% \begin{proposition}
%     Suppose $\zeta_-,\xi_0\in\R_+$, $\det (J_{\sqrt{\zeta_-}})>0$ and $\trace (J_{\sqrt{\xi_0}})<0$. Then for all $b\in\R$ the unique equilibrium of equation \ref{eq:sys} is stable. 
% \end{proposition}
% \begin{proof}
%     From the proof of Prop.~\ref{prop:trace-roots} follows that if $\trace (J_{\sqrt{\xi_0}})<0$ then $\forall z\in[0,d^{-1}], \trace (J_{z})<0$, which together with $\det (J_z)>0$ implies stability. 
% \end{proof}

The case where $\mu_0$ is less than, but close to $\frac{d}{a}$ such that $\det(J_z)<0$, presents some additional subtleties in its analysis as discussed in Remark~\ref{rem:range_of_mu}. However, the behavior is similar to that described in Thm.~\ref{thm:hopf}. In summary, the equilibrium can no longer be shown to be unique, and instead there appear three equilibria with the same sign as the input: one stable node, one unstable node, and one saddle. The saddle and unstable node appear in a saddle-node (SN) bifurcation, and then the saddle and stable node annihilate in another SN, which marks the threshold to spiking. It is noteworthy that in this case, the spike limit cycle appears with a non-zero amplitude at the bifurcation, and its existence follows from the same arguments in Thm. \ref{thm:hopf-LC}. In both cases, the amplitude of the spiking limit cycles for inputs above threshold can be studied in the limit $\varepsilon\to 0$ using geometrical singular perturbation methods~\cite{Wechselberger_Canards_2007}.

\section{Spiking onset by increasing $\mu_0$ when $b=0$}\label{sec:PF-bif}

\begin{figure}
    \centering
    \includegraphics[width=1\linewidth]{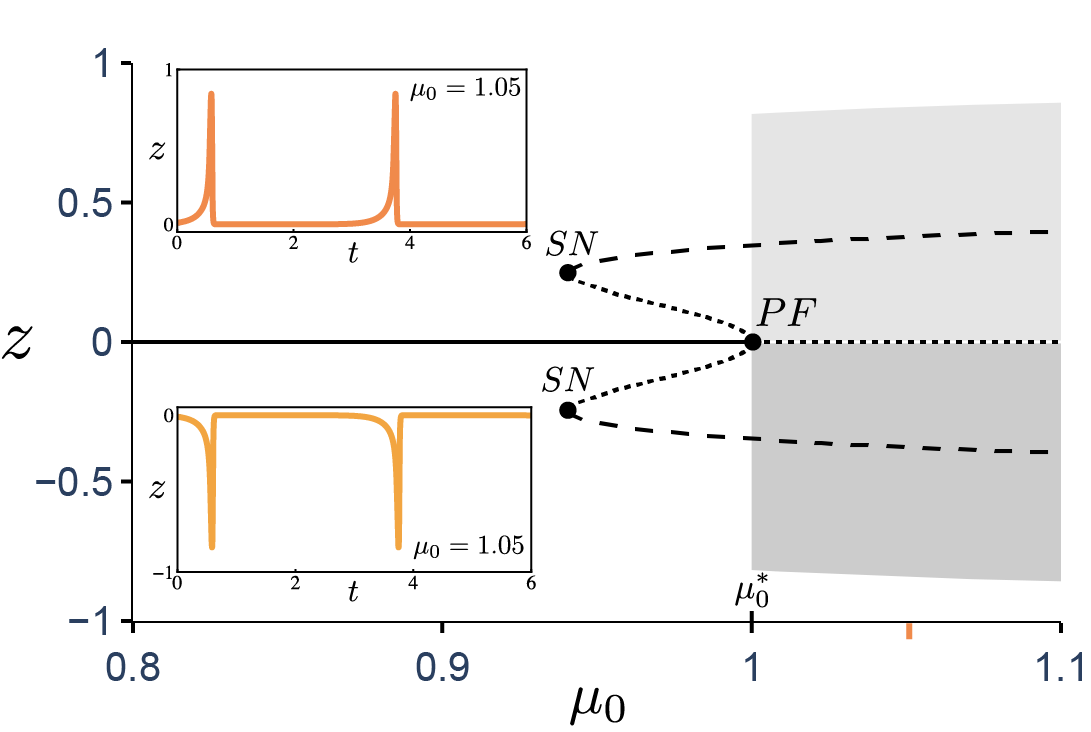}
    \caption{Bifurcation diagram of system (\ref{eq:sys}) with respect to basal sensitivity ($\mu_0$) projected onto the opinion state ($z$) axis to reduce from three dimensions to two. Solid lines indicate stable nodes, dashed lines indicate unstable nodes, dotted lines indicate saddle points, shaded areas indicate the two stable limit cycles in different shades of gray. The vertical width of shaded areas represent amplitude of spike. Black circles represent bifurcations, one subcritical pitchfork ($PF$) and two symmetric saddle-node bifurcations ($SN$). Inset figures show positive (top, $z(0)=0.1$) and negative (bottom, $z(0)=-0.1$) spike limit cycles for $\mu_0=1.05$. Parameters used are $b = 0, a = 1, d = 1, k = 2.3, k_{s} = 16, \varepsilon = 0.1$.}
    \label{fig:munot-bif}
\end{figure}

We now consider the zero input case $b=0$, and study bifurcations of~(\ref{eq:sys}) with respect to $\mu_0$ at the critical point $\mu_0^*=\frac{d}{a}$. This bifurcation is a subcritical pitchfork at the origin, as shown in Fig.~\ref{fig:munot-bif}. This can be shown by projecting the local dynamics around the origin onto the bifurcation center manifold, which reduces to the single-agent non-excitable fast and flexible nonlinear opinion dynamics \cite{Leonard_NOD_ARpaper2024}. For critical and supercritical values of $\mu_0$ there are exactly three equilibria, one of which is the origin. The other two are symmetric with respect to the axis $z=0$, and their properties determine the global behavior of the system. We prove conditions under which each of the two non-zero equilibria is an unstable focus at $\mu_0=\mu_0^*$, which ensures that for some open interval above $\mu_0^*$ the system has two symmetric spike limit cycles, %each of which encircles one of the non-zero equilibria.
one around each unstable focus.
These two large amplitude limit cycles %correspond to spikes that emerge in a qualitatively different way than the Hopf discussed in the previous section.
appear through a homoclinic bifurcation at the subcritical pitchfork, and not a Hopf as in the previous section. Fig.~\ref{fig:munot-bif} shows a representative bifurcation diagram projected onto the opinion subspace. In an interval above $\mu_0^\ast$, upward and downward spiking oscillations coexist, and even small evidence in one direction or the other can change the sign of the spiking.  

These spikes break deadlock, i.e., a decision is made even with little or no evidence for a superior option. %must be broken in response to even small evidence, which 
In embodied scenarios, self-reinforcement of the randomly chosen option would follow. %which in many instances would break the symmetry of the problem and lead to a reinforcement of the decision.
For example, in the navigation problem in Fig. \ref{fig:CharlotteFig1}, if the robot is heading towards an obstacle head-on, and $\mu_0$ grows with the urgency of making a decision, then when $\mu_0$ becomes supercritical the system will produce spikes in one direction in response to even the smallest noise. Once the first spikes happen, the robot will be facing the obstacle at a slight angle, and this will bias the dynamics to keep choosing that direction if there is no further change in information. 

To analyze this bifurcation, first observe that the stability of the origin changes at $\mu_0^*$: when $\mu_0<\frac{d}{a}$ then $c_0=\mu_0 a - d<0$ so $\det(J_0)>0$ and $\trace (J_0) <0$, which implies the equilibrium at the origin is a stable node. When $\mu_0>\frac{d}{a}$ we have $\mu_0a-d>0$ so $\det (J_0)<0$, which implies the equilibrium is a saddle point. In all propositions in this section we assume $b=0$. %First, we have the following.

\begin{proposition}
    There are no limit cycles encircling the neutral equilibrium. If the parameters are such that the origin is the only equilibrium, then there are no limit cycles.
\end{proposition}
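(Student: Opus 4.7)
The plan is to exploit the reflection symmetry and invariance that appear when $b=0$, together with a standard index/Poincar\'e--Bendixson fact for planar flows.

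First I would verify that the line $\{z=0\}$ is invariant for \eqref{eq:sys} when $b=0$. Substituting $z=0$ into the first equation gives $\dot z = -d\cdot 0 + \tanh((k\cdot 0+\mu_0-s)\cdot a\cdot 0+0)=\tanh(0)=0$, so no trajectory crosses the $s$-axis. By uniqueness of solutions, every trajectory with $z(0)>0$ (respectively $z(0)<0$) remains in the open half-plane $\{z>0\}$ (respectively $\{z<0\}$) for all time. Any closed orbit encircling the origin would have to intersect $\{z=0\}$, contradicting invariance; this proves the first claim.

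For the second claim I would invoke the standard planar index-theoretic fact (a direct consequence of the Poincar\'e--Bendixson theory cited already in the proof of Thm.~\ref{thm:hopf-LC}) that any periodic orbit of a smooth planar vector field must enclose at least one equilibrium, and more precisely the sum of the indices of the enclosed equilibria must equal $+1$. If the origin is the only equilibrium of \eqref{eq:sys}, then every hypothetical closed orbit must encircle the origin. But the first part of the proposition rules this out, so no limit cycle exists.

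I do not anticipate a real obstacle: the argument is structural and uses only (i) the elementary observation that $\tanh(0)=0$ makes $\{z=0\}$ an invariant line when $b=0$, and (ii) the index-sum theorem for planar vector fields. The only point worth stating cleanly is the implication ``invariant separating line $\Rightarrow$ no orbit encircling the origin,'' which follows from uniqueness of solutions to the smooth ODE \eqref{eq:sys}. No quantitative estimates on the parameters $\mu_0,k,k_s,a,d,\varepsilon$ are needed, consistent with the proposition's phrasing ``for any parameters with $b=0$.''
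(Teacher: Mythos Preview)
Your proposal is correct and follows essentially the same route as the paper: both use the invariance of the line $\{z=0\}$ when $b=0$ to rule out any closed orbit encircling the origin, and then invoke index theory to conclude that if the origin is the only equilibrium no limit cycle can exist. Your write-up is slightly more explicit about why the invariance holds and about the role of uniqueness of solutions, but the argument is the same.
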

\begin{proof}
    When $b=0$, the line $z=0$ is invariant (it is the stable manifold of the neutral equilibrium at the origin). If there existed a limit cycle surrounding the origin it would intersect the set $z=0$, but this leads to a contradiction. Suppose the origin is the only equilibrium. Index theory tells us that every limit cycle encircles an equilibrium. Thus, there are no limit cycles. 
\end{proof}

Next we show that for supercritical values of $\mu_0$ the system has exactly three equilibria. 
\begin{proposition}
    If $\mu_0\geq\frac{d}{a}$ and $k>\frac{d^3}{3a}$ then~(\ref{eq:sys}) has exactly three equilibria. 
\end{proposition}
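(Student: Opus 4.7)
The plan is to exploit the odd symmetry of $h$ at $b=0$ to reduce the count to positive solutions, and then recast the fixed-point equation via a substitution that turns the problem into a one-shot concavity argument. Since $\psi(z)=az(-k_sz^4+kz^2+\mu_0)$ is odd in $z$, so is $h$, hence $\hat z=0$ is a fixed point and any others come in $\pm$ pairs. Moreover $d\hat z=\tanh(\psi(\hat z))\in(-1,1)$ forces any fixed point to lie in $(-1/d,1/d)$, so it suffices to show that $h$ has exactly one zero in $(0,1/d)$.

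Set $u=d\hat z\in(0,1)$; the equation $\tanh(\psi(\hat z))=d\hat z$ becomes $\tilde\psi(u)=\tanh^{-1}(u)$, where
\[
\tilde\psi(u):=\psi(u/d)=\tfrac{a\mu_0}{d}\,u+\tfrac{ak}{d^3}\,u^3-\tfrac{ak_s}{d^5}\,u^5.
\]
Using $\tanh^{-1}(u)=\sum_{k\ge0}u^{2k+1}/(2k+1)$, the difference $g(u):=\tilde\psi(u)-\tanh^{-1}(u)$ contains only odd powers of $u$ and factors as $g(u)=u\,\phi(u^2)$, where
\[
\phi(t)=A_1+A_3\,t+A_5\,t^2+A_7\,t^3+\cdots,
\]
with $A_1=\tfrac{a\mu_0}{d}-1$, $A_3=\tfrac{ak}{d^3}-\tfrac{1}{3}$, $A_5=-\bigl(\tfrac{ak_s}{d^5}+\tfrac{1}{5}\bigr)$, and $A_{2k+1}=-\tfrac{1}{2k+1}$ for $k\ge 3$. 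The hypotheses $\mu_0\ge d/a$ and $k>d^3/(3a)$ give $A_1\ge 0$ and $A_3>0$, while every higher-order coefficient is strictly negative.

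The crux is the observation that every term of the power series $\phi''(t)=\sum_{k\ge 2}k(k-1)\,A_{2k+1}\,t^{k-2}$ is non-positive and the leading term $2A_5$ is strictly negative, so $\phi''(t)<0$ on $[0,1)$ and $\phi$ is strictly concave. Combined with $\phi(0)=A_1\ge 0$, $\phi'(0)=A_3>0$, and $\phi(t)\to-\infty$ as $t\to 1^-$ (inherited from $\tanh^{-1}$ blowing up), concavity forces $\phi$ to be positive just above $t=0$, attain a unique maximum, and then decrease monotonically to $-\infty$, crossing zero exactly once in $(0,1)$. This unique $t^*$ yields a unique positive fixed point $\hat z^*=\sqrt{t^*}/d$, which together with $0$ and $-\hat z^*$ gives exactly three solutions. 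The main subtlety---and the place the hypothesis $k>d^3/(3a)$ does real work---is the borderline case $A_1=0$ (i.e., $\mu_0=d/a$), where $\phi(0)=0$: the strict positivity of $\phi'(0)=A_3$ keeps that zero isolated and confines the additional root strictly inside $(0,1)$.
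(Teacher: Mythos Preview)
Your proof is correct and takes a genuinely different route from the paper's. The paper works with $h$ directly: it first shows $h$ is increasing just past the origin (via a Taylor expansion, using both hypotheses) and that $h\to-\infty$, securing at least one positive zero; then, for the upper bound, it invokes the identity $h'(\hat z)=-\varepsilon^{-1}\det J_{\hat z}$ at fixed points and analyzes the cubic $\zeta\mapsto -\varepsilon^{-1}\det J_{\sqrt\zeta}$ on $[0,d^{-2}]$ to pin down the sign pattern of $h'$ at any zero of $h$, ruling out two positive roots by the usual ``consecutive zeros cannot share a derivative sign'' argument.

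Your approach instead inverts the $\tanh$ and exploits the explicit Taylor series $\tanh^{-1}(u)=\sum_{k\ge0}u^{2k+1}/(2k+1)$, so that after the substitution $t=d^2z^2$ the whole problem collapses to a single concavity statement about $\phi$. This is more elementary and self-contained: it avoids the Jacobian machinery entirely, and the hypotheses $\mu_0\ge d/a$ and $k>d^3/(3a)$ translate transparently into $A_1\ge0$ and $A_3>0$. The paper's route, by contrast, reuses objects (the trace/determinant polynomials) already developed for the Hopf analysis in Section~\ref{sec:Hopf-bif}, so it fits the surrounding narrative; but as a stand-alone proof of this proposition, your argument is cleaner and makes the role of each hypothesis more visible.
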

\begin{proof}
Since $h(0)=0$, the Taylor expansion of $h$ at $z=0$ is $h(z)=(a \mu_0-d)z+a(k-\frac{a^2 \mu_0^3}{3})z^3+\mathcal{O}(z^5)$. So $h'(0)=a \mu_0 - d\geq 0$, $h''(0)=0$, $h'''(0)=6a(k-\frac{a^2 \mu_0^3}{3})>0$. This means $h$ is monotonically increasing for some open interval above zero. Observe that $\lim_{z\to\infty}h(z)=-\infty$, so there must be at least one root greater than zero. By a similar argument to Prop. \ref{prop:single-sol}, we show there are no more than three roots of $h$. By (\ref{eq:h'_and_derJ}), $h(\bar z)=0$ implies $h'(\bar z)=-\varepsilon^{-1}\det (J_{\bar z})$. By considering it as a third degree polynomial of $\bar z^2$, it is straightforward to show that, under the hypothesis in the statement, $\det (J_{\bar z})$ changes sign exactly once in the interval $(0, d^{-1})$. Doing algebra, we can show that the unique root of $\det (J_{\bar z})$ in the interval is not a root of $h$. Together, this implies there is at most one root of $h$ in the interval $(0, d^{-1})$. Using symmetry and the fact that there can be no solution greater than $d^{-1}$, we conclude there are exactly three solutions.
% To show that there are no more than three solutions recall that by (\ref{eq:h'_and_derJ}), when $h(\bar z)=0$ we have $h'(\bar z)=-\varepsilon^{-1}\det (J_{\bar z})$. Considering this as a third degree polynomial with respect to $\bar z^2$, there can be at most three positive solutions. However, by hypothesis $-\varepsilon^{-1}\det (J_0)=a \mu_0-d\geq0$, and from (\eqref{eq:h'_and_derJ})  $-\varepsilon^{-1}\det (J_{d^{-1}})=-d<0$. Thus, there is exactly one root for $\det (J_z)$  in the interval $(0,d^{-1})$, denoted by $\rho_0\in (0,d^{-1})$, and it satisfies that $\forall z^2\in(0, \rho_0),-\varepsilon^{-1}\det (J_z)>0$, and $\forall z^2\in(\rho_0, 1/d),  -\varepsilon^{-1}\det (J_z)<0$. Using symmetry and the fact that there can be no solution greater than $d^{-1}$, we conclude there are exactly three solutions. 
\end{proof}

To prove the existence of limit cycles for $\mu_0>\mu_0^*$, we again make use of the Poincaré-Bendixson theorem. At  $\mu_0=\mu_0^\ast$ and at the limit $\varepsilon=0$, we get $c_0=0$ in (\ref{eq:traceJ}), so solving for the roots of $\trace(J_z)$ reduces to solving a second degree polynomial. Let $\varrho_\pm:=\sqrt\frac{c_2\pm\sqrt{c_2^2-4c_1c_3}}{2c_3}$ be the positive roots of $\trace(J_z)$.
We next give a sufficient condition for the existence of limit cycles.

\begin{theorem}[Spike Limit Cycles After Critical $\mu_0$]\ \\
    If $\frac{d^3}{3a}<k<k_s\varrho_-^2$ then for sufficiently small $\varepsilon>0$ there is an open interval above $\mu_0^*$  such that there exist two symmetric limit cycles, each of which encircles one of the two equilibria that are not the origin. 
\end{theorem}
\begin{proof}
    We show existence in the positive halfspace (the negative follows from symmetry). First, we show that the positive equilibrium is unstable at $\mu_0=\mu_0^*$. The value $\sqrt{k/k_s}$ is the unique positive root of $-dz+\psi(z)$. The polynomial $\psi(z)$ also has exactly three roots. Let $\varsigma$ denote its positive root, with $\sqrt{k/k_s}\leq\varsigma\leq d^{-1}$. Since $x>0\Rightarrow \tanh(x)<x$, then $\forall z \in(0,\varsigma), h(z)<-dz+\psi(z)$. So $h(\bar z)=0<-dz+\psi(\bar z)$ implies $\bar z < \sqrt{k/k_s}$. Condition $\frac{d^3}{3a}<k$ implies $\varrho_-\in(0,d^{-1})$, and so $\forall z\in(0,\varrho_-), \exists \varepsilon>0$ such that $\trace (J_z)>0$. Since by hypothesis $\sqrt{k/k_s}<\varrho_-$, we conclude for $\varepsilon>0$ small enough $\trace(J_{\bar z})>0$.  Now, at $\mu_0=\mu_0^*$, $\det (J_z)$ is negative between $0$ and its first positive root, so it must be that $\det (J_{\bar z})>0$. Otherwise, $0=h(\bar z) = \int_{t=0}^{\bar z} h'(t)dt\leq \int_{t=0}^{\bar z} \det (J_t)dt<0$ which is a contradiction. We conclude $\bar z$ is unstable for $\mu_0$ in some open interval above $\mu_0^*$. 
    The construction of a bounding region is slightly technical, but can be justified rigorously. We outline the following steps. First take the segment from $(d^{-1}, -1)\in\R^2$ to $(d^{-1}, k_sd^{-4})$, and from here to a point $(\varpi, k_sd^{-4})$ such that when continuing the orbit from that point it first crosses the line $s=\mu_0-\mu_0^*$ at a point $(\tilde\varpi, \mu_0-\mu_0^*)$ such that $\dot z>0$ and $\tilde\varpi<\bar z$. This is possible because the horizontal $z=0$ is the stable manifold of the origin
    %, so it is possible to take a point close enough to pass as close as desired to the origin, and so to cross the line $s=\mu_0-\mu_0^*$ as close as desired to the $s$ axis as well. Since the first terms of the Taylor expansion of $\dot z$ restricted to that line is $\dot z(\mu_0-\mu_0^*, z) = (ak-\frac{d^3}{3})z^3 + \mathcal{O}(z^5)$, then there is a region around $(0, \mu_0-\mu_0^*)$ where $\dot z>0$. 
    Include the orbit from $(\varpi, k_sd^{-4})$ to $(\tilde\varpi, \mu_0-\mu_0^*)$ as part of the boundary. Then include the line from $(\tilde\varpi, \mu_0-\mu_0^*)$ to $(\tilde\varpi, -1)$. Finally, include the line segment from $(\tilde\varpi, -1)$ to $(d^{-1}, -1)$. This region contains the equilibrium by construction, and the vector field along the boundary points inward. Thus, we can apply the Poincaré-Bendixson theorem to conclude that there exists a limit cycle and that every trajectory converges to a limit cycle.
\end{proof}

The resulting limit cycles appear through global bifurcations of two homoclinic orbits to the origin at $\mu_0=\mu_0^*$. Thus, close to the critical $\mu_0^*$ the limit cycles have arbitrarily small frequency. This is a desirable property. It implies that while in the slightly supercritical $\mu_0$ regime (i.e., $\mu_0>\mu_0^\ast$), even small noise can lead to spiking, this will be at a low frequency, such that it will break symmetry without making the system overly sensitive to noise. 

\begin{figure}
    \centering
    \includegraphics[width=1.0\linewidth]{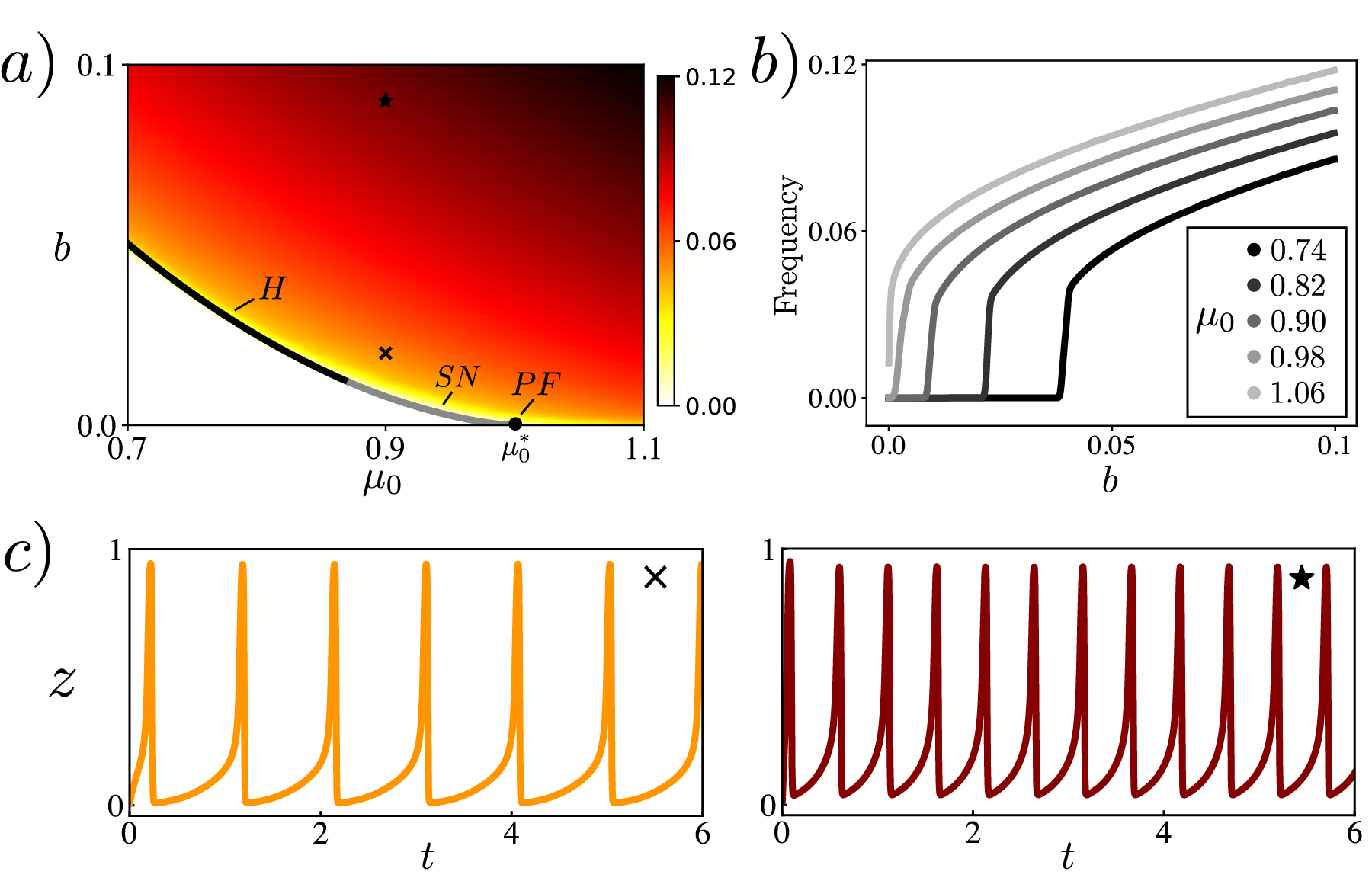}
    \caption{\textbf{a)} Region of the $(\mu_0, b)$ parameter space. For each point, color represents the frequency of spiking. Bifurcations that trigger spiking are: Hopf (black line, $H$), Saddle-Node (gray line, $SN$), and Pitchfork (black point, $PF$). Only positive opinion halfspace is shown to save space. The negative side is the reflection along horizontal. \textbf{b)} Five input magnitude to spiking frequency curves, corresponding to vertical slices of a). \textbf{c)} Spiking for $\mu_0=0.9$ and two values of $b$, marked with cross and star in a). For all subfigures, parameters are $a = 1, d = 1, k = 3, k_{s} = 22, \varepsilon = 0.05$.}
    \label{fig:freqs}
\end{figure}

\section{Tunability of input threshold and frequency encoding}\label{sec:tunability&freq}

% \begin{figure}
%     \centering
%     \includegraphics[width=0.8\linewidth]{figs/twoparVSfreq.png}
%     \caption{Region of the $(\mu_0, b)$ parameter space. For each point color represents the frequency of spiking obtained through numerical analysis. Black line marked with $H$ represents Hopf bifurcation and point marked with $PF$ is the subcritical pitchfork bifurcation. Only a section of the positive opinion halfspace is shown to save space, since the negative halfspace is the symmetric reflection along the horizontal vertex. Parameters used are $a = 1, d = 1, k = 2.3, k_{s} = 16, \varepsilon = 0.1$}.
%     \label{fig:twoparVSfreq}
% \end{figure}

In Sections~\ref{sec:Hopf-bif} and \ref{sec:PF-bif} we rigorously showed the identity of the bifurcations that happen at the onset of spiking in S-NOD and studied their thresholds. These happen through three routes. One is through a Hopf bifurcation for supra-threshold inputs at fixed intermediate values of $\mu_0$, which leads to the rapid appearance of a single spike limit cycle in either the positive or negative opinion halfspaces, depending on the sign of the input. The second, described briefly, happens for supra-threshold inputs at close to critical values of $\mu_0<\mu_0^*$ through two Saddle-Node bifurcations in quick succession. The third is through a pitchfork bifurcation that happens at the origin when there is no input and the parameter $\mu_0$ crosses the critical value $\mu_0^*$. Representative bifurcation diagrams of these routes are shown in Figs. \ref{fig:b-bifurcation} and \ref{fig:munot-bif}. 

A fuller picture can be obtained by considering both parameters simultaneously, as shown in Fig. \ref{fig:freqs}a, where the input threshold as well as its dependence on $\mu_0$ can be observed. In the figure, the area in white corresponds to the case of a single (globally) stable equilibrium, whereas the colored region corresponds to a single (globally) attracting spike limit cycle. The figure suggests that the pitchfork bifurcation route is rather special, in the sense that most trajectories along the $(\mu_0, b)$ parameter space going from the resting to spiking regions pass through a Hopf or SN bifurcation. In this sense these routes are the generic ones. Fig.~\ref{fig:freqs}a also shows the dependence of the input threshold $b^*$ on the parameter $\mu_0$. As $\mu_0$ increases $b^*$ becomes smaller. %Formally

\begin{proposition}\label{prop:threshold-shrinks}
    For $\mu_0<\mu_0^*$ such that the input threshold $b^*$ exists and for sufficiently small $\varepsilon>0$, the input threshold is decreasing with respect to $\mu_0$, i.e. $\frac{\partial b^*}{\partial \mu_0}<0$. 
\end{proposition}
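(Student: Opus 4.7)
The plan is to treat $b^*$ as implicitly defined by the two coupled conditions characterizing the Hopf bifurcation: the fixed-point equation $h(z^*;b^*,\mu_0)=0$ and the vanishing-trace condition $\trace J_{z^*}(\mu_0)=0$. Only the coefficients $c_0=a\mu_0-d$ and $c_1=3ak-ad^2\mu_0$ in the polynomial $q_{tr}(z^2;\mu_0):=\trace J_z$ depend on $\mu_0$, so the trace condition implicitly defines $z^*=z^*(\mu_0)$, and the fixed-point equation then defines $b^*(\mu_0)$.

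First I would differentiate the trace equation to obtain $dz^*/d\mu_0$. A direct computation gives $\partial q_{tr}/\partial\mu_0 = a(1-d^2 z^2)$, which is strictly positive on $[0,d^{-1})$. From the proof of Prop.~\ref{prop:trace-roots}, $z^*$ is the smaller of the two positive roots of $q_{tr}$ in $[0,d^{-1}]$, and is the point where $q_{tr}$ crosses from negative to positive values as $z^2$ increases, so $\partial q_{tr}/\partial(z^2)|_{z^{*2}}>0$. The implicit function theorem then yields $dz^*/d\mu_0<0$.

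Next I would differentiate the fixed-point condition totally with respect to $\mu_0$, using $\partial h/\partial z|_{z^*}=-\det J_{z^*}/\varepsilon$ from \eqref{eq:h'_and_derJ}, $\partial h/\partial b|_{z^*}=1-d^2 z^{*2}$, and $\partial h/\partial\mu_0|_{z^*}=az^*(1-d^2 z^{*2})$. Rearranging gives
\[
\frac{db^*}{d\mu_0} \;=\; -a z^* \;+\; \frac{\det J_{z^*}}{\varepsilon(1-d^2 z^{*2})}\,\frac{dz^*}{d\mu_0}.
\]
At the Hopf bifurcation $\det J_{z^*}>0$, and since $\mu_0<\mu_0^*$ gives $q_{tr}(0)=c_0-\varepsilon<0$ while $q_{tr}(d^{-2})=-d-\varepsilon<0$ for any $\mu_0$, one has $z^*\in(0,d^{-1})$ strictly. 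Both terms on the right-hand side are therefore negative, and the claim follows.

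The main obstacle is justifying the strict transversality $\partial q_{tr}/\partial(z^2)|_{z^{*2}}>0$. If $z^{*2}$ were a double root of $q_{tr}$, the two Hopf values $b^*$ and $b^{**}$ would coincide and the implicit function theorem would fail at $z^*$. Under the hypotheses of Thm.~\ref{thm:hopf} together with sufficiently small $\varepsilon$, the two positive roots $z^{*2}<z^{**2}$ of $q_{tr}$ in $[0,d^{-2}]$ are distinct, ensuring transversality and making the above computation rigorous.
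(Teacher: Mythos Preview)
Your approach is essentially the same as the paper's: implicitly differentiate the fixed-point equation with respect to $\mu_0$, use $\partial(\trace J_z)/\partial\mu_0=a(1-d^2z^2)>0$ together with the sign of $\partial q_{tr}/\partial(z^2)$ at $z^{*2}$ to get $dz^*/d\mu_0<0$, and then check signs. Your derivation is in fact more careful than the paper's: the paper writes $\partial b^*/\partial\mu_0=\frac{\det J_{z^*}}{1-d^2z^{*2}}\,\partial z^*/\partial\mu_0$, which omits both the $-az^*$ term coming from the explicit $\mu_0$-dependence of $h$ and the factor $1/\varepsilon$ from \eqref{eq:h'_and_derJ}; neither omission affects the sign, but your formula is the correct one.
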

\begin{proof}
    From equation \ref{eq:imp-sol} it follows by implicit differentiation that $\frac{\partial b^*}{\partial \mu_0} = \frac{\det(J_{z^{\ast}})}{1-d^2z^{*2}}\frac{\partial z^{\ast}}{\partial \mu_0}$, where $z^{\ast}$ is as in Prop. \ref{prop:trace-roots}. By comparing polynomials $\trace(J_z)$ and $\det(J_z)$ it can be seen that $\forall z\in(-d^{-1}, d^{-1}), -\varepsilon^{-1}\det(J_z) < \trace(J_z)$ for sufficiently small $\varepsilon>0$, thus $ \frac{\det (J_{z^{\ast}})}{1-d^2z^{\ast2}}>0$. Then, $\frac{\partial \trace(J_z)}{\partial \mu_0}=a(1-d^2z^2)$ implies $\frac{\partial z^{\ast}}{\partial \mu_0}<0$. So we conclude $\frac{\partial b^*}{\partial \mu_0}<0$.
\end{proof}

\begin{figure}
    \centering
    \includegraphics[width=0.8\linewidth]{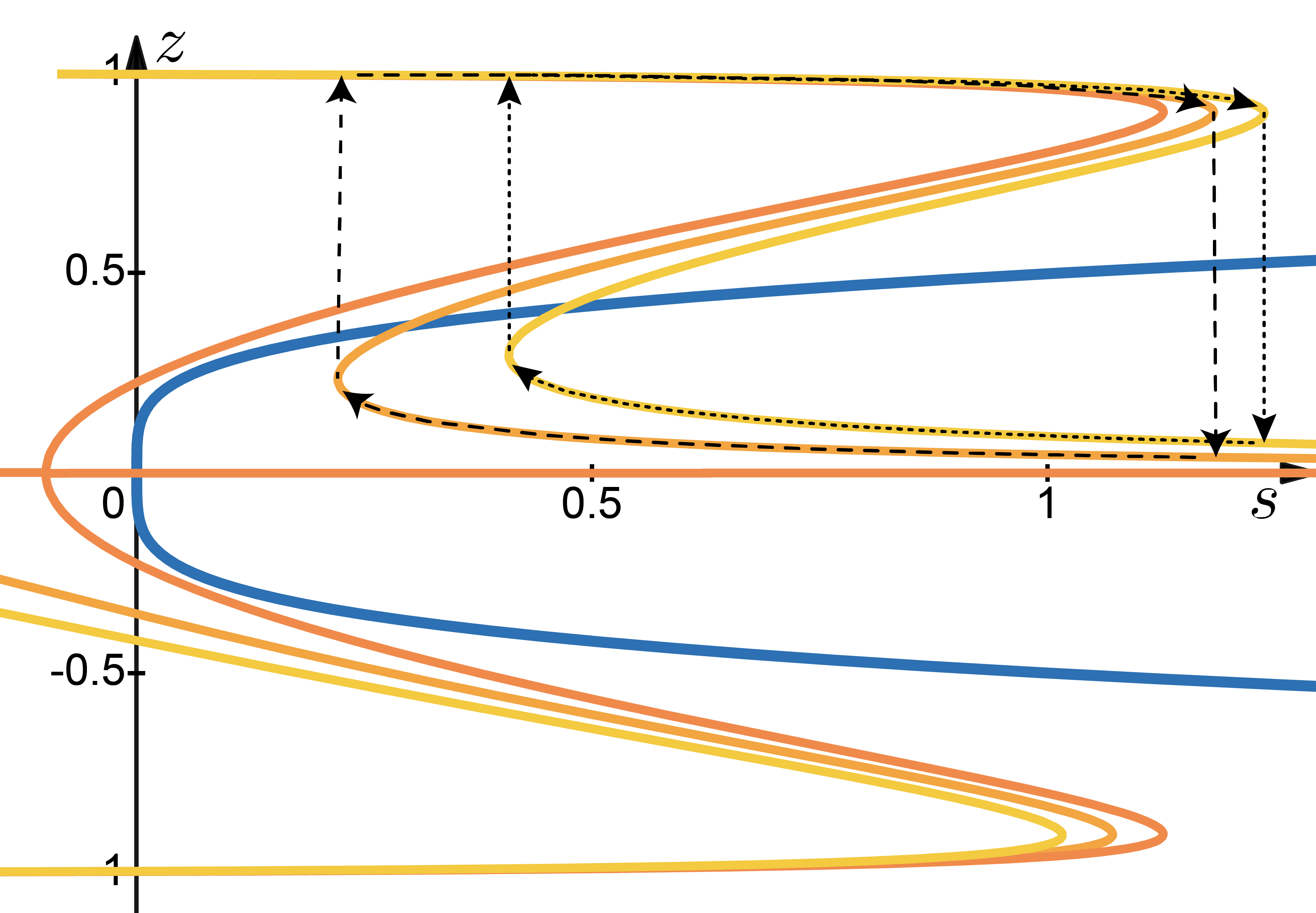}
    \caption{Nullclines for \eqref{eq:sys}. Three nullclines $z$ for values of $b$ equal to $0$ (dark orange), $0.05$ (light orange) and $0.1$ (yellow). Dashed and dotted lines correspond to spike limit cycle for $b=0.05$ and $b=0.1$ respectively, both at the singularly perturbed limit. Blue line is $s$ nullcline. %Shaded blue regions correspond to $\dot s$ level sets in steps of $0.3$. Darker blue means moving faster to the left. 
    }
    \label{fig:nulls}
\end{figure}
 
The previous proposition covers the case when the equilibrium is unique and a Hopf marks the threshold to spiking (black line in Fig. \ref{fig:freqs}). When the equilibrium is not unique and there is a SN bifurcation (gray line in Fig. \ref{fig:freqs}), similar arguments to Prop. \ref{prop:sol-lim} and Prop. \ref{prop:threshold-shrinks} would apply, but the critical input would be when the smallest equilibrium crosses the first root of $\det(J_z)$. 

S-NOD also displays encoding of the input magnitude as the spiking frequency, which can be observed both in Fig. \ref{fig:freqs}a as the color intensity in the spiking region, and also in Fig. \ref{fig:freqs}b where ``slices" are taken for five fixed values of $\mu_0$. The frequency is always an increasing function of the input for the parameter values studied and tuned by $\mu_0$. Consistent with this, $\det(J_z)$ can be shown to be monotonically increasing in the interval between the onset of spiking and the Hopf at which the system saturates and the limit cycle disappears ($b^\dagger$ in Section \ref{sec:Hopf-bif}, which always exists). Recalling that the instantaneous frequency at a planar Hopf is the square root of $\det(J)$, we have that the initial frequency at the first Hopf (when the fixed point is unique) is smaller than the frequency at the second. Proving monotonicity across the whole interval analytically is not trivial, but geometric singular perturbation arguments could be used to estimate the traveling time of the slow portion of spiking trajectories, which are along the branches of the $z$-nullcline (see, e.g.,~\cite{sacre_singularly_2016}). Fig. \ref{fig:nulls} shows an example of how the $z$-nullcline changes with increasing input. The spike limit cycles shown are at the singular perturbation limit ($\varepsilon\to0$), for two suprathreshold values of $b$.

The other parameters provide different dials which can also be used to tune the thresholds and other properties of the spiking behavior, such as the amplitude and the duration of individual spikes. In particular, parameter $d$ can tune the amplitude of the spikes, while parameters $a$, $k$ and $k_s$ can tune the duration and frequency encoding. Even though changes to these parameters are not independent, appropriate simultaneous changes in these can decouple the spike properties and allow for freedom in the design for particular applications. Fig. \ref{fig:b-bifurcation} shows a set of parameters where $b^*$ and $z^{\ast}$ are comparable in magnitude to the spike amplitude; however, these can be designed to be of arbitrary size, such that they are separated by orders of magnitude. This could be important in applications with noisy measurements, where it is important to clearly distinguish sub-threshold states versus spiking states. 

% \begin{figure}
%     \centering
%     \includegraphics[width=0.8\linewidth]{figs/freqslices.png}
%     \caption{Comparison of four vertical slices of the Fig. \ref{fig:twoparVSfreq} for values of $\mu_0$ equal to $0.82, 0.9, 0.98,$ and $ 1.06$. } 
%     \label{fig:freqslices}
% \end{figure}

% \section{Multiagent S-NOD}

\section{Discussion}
The results in this paper show the tractability and robustness of the S-NOD model properties such as input thresholds to spiking, indecision-breaking and frequency encoding. These properties can be leveraged in order to use S-NOD as a new neuromorphic control block for adaptable robustness and agility in a way that generalizes the approach of \cite{Cathcart_SNOD_2024}. Filtered input, e.g., measurement of the plant output and environment, enters the block as a real-valued signal $b(t)$, and the opinion state $z(t)$ is filtered and transformed into a control signal for the plant that is fed to the actuators. The parameter $\mu_0$ can be adjusted as a function of some aspect of the environment in order to modulate the input threshold, for example in response to urgency or a need for higher responsiveness. S-NOD can be extended to consider multiple agents and/or multiple options similarly to how the NOD model generalizes in \cite{Bizyaeva_NOD_2023}. This will allow for rich collective dynamics and for multidimensional opinion spikes enabling more complex actuation.

\bibliographystyle{IEEEtran}
\bibliography{refs}

\end{document}